\newtheorem{theorem}{Theorem}
\newtheorem{lemma}[theorem]{Lemma}
\newtheorem{statement}[theorem]{Statement}
\newtheorem{conjecture}[theorem]{Conjecture}
\newcommand{\Prob}[1]{\mathbf{Pr}\left[#1\right]}
\def\epsilon{\ensuremath{\varepsilon }}
\newcommand{\eps}{\ensuremath{\epsilon }}
\newcommand{\Exp}[1]{\mathbf{E}\left[#1\right]}
\newcommand{\COMMENTED}[1]{{}}
\newcommand{\real}{\ensuremath{\mathbb{R}}}
\begin{document}
\title{Collecting Coupons is Faster with Friends}
%
%
\author{Dan Alistarh\\IST Austria\\\texttt{dan.alistarh@ist.ac.at} \and
 Peter Davies\\IST Austria\\\texttt{peter.davies@ist.ac.at}}
%
%
%
\maketitle              

\begin{abstract}
In this note, we introduce a distributed twist on the classic coupon collector problem: 
a set of $m$ collectors wish to each obtain a set of $n$ coupons; for this, they can each sample coupons uniformly at random, but can also  meet in pairwise interactions, during which they can exchange coupons. By doing so, they hope to reduce the number of coupons that must be sampled by each collector in order to obtain a full set. This extension is natural when considering real-world manifestations of the coupon collector phenomenon, and has been remarked upon and studied empirically [Hayes and Hannigan 2006, Ahmad et al. 2014, Delmarcelle 2019].

We provide the first theoretical analysis for such a scenario. We find that ``coupon collecting with friends'' can indeed significantly reduce the number of coupons each collector must sample, and raises interesting connections to the more traditional variants of the problem. While our analysis is in most cases asymptotically tight, there are several open questions raised, regarding finer-grained analysis of both ``coupon collecting with friends,'' and of a long-studied variant of the original problem in which a collector requires multiple full sets of coupons.

\end{abstract}

\section{Introduction}
The coupon collector problem is a classic exercise in probability theory, appearing in standard textbooks such as those of Feller \cite{F57} and Motwani and Raghavan \cite{MR95}. It is often introduced with a story along the lines of the following: a cereal company runs a promotion giving away a toy (the ``coupon'') in each box of cereal sold. The toys are chosen uniformly at random from some finite set of different types. A child wishes to collect the full set of toys, and our task is to analyze the number of cereal boxes her parents must purchase to achieve this. This number is, of course, a random variable, and while elementary bounds on it are quite straightforward, a tighter analysis requires more sophisticated techniques (see, e.g., \cite{ER61}).

A modern real-world example of this phenomenon is the World Cup sticker album \cite{HH06}. Collectors purchase sealed packs of stickers of football players, and aim to collect one of each in order to fill all the slots in their album. Completing the sticker album has proven a very popular activity among (mostly, but by no means exclusively) young football fans every four years, and has highlighted an aspect which is absent from the classical analysis of the coupon collector's problem: one can achieve a full collection much faster by swapping duplicate coupons with friends who are also collecting. This has been noted previously and studied empirically, specifically for the World Cup sticker album \cite{HH06,ahmadcoupon,D19}, but we are not aware of any prior theoretical analysis for such a setting in general.

Of course, for theoretical analysis, one must first define a model specifying how swapping of coupons is permitted. If all collectors are allowed to swap freely, then the problem is equivalent to a variant which \emph{has} seen prior theoretical study: that in which $m>1$ full sets of coupons must be completed by a single collector. This variant was studied by Newman and Shepp \cite{N60} and Erdös and Rényi \cite{ER61}. In more recent works (e.g., \cite{DP18} and the references therein), problem settings of this sort are often referred to as ``coupon collector with \emph{siblings}:'' the accompanying story is that there is a single collector, but she has a succession of younger siblings to whom she gives duplicate coupons upon receiving them. One can then ask long it takes for the $m^{th}$ sibling to complete his collection. 
Specifically, Newman and Shepp \cite{N60} showed that the number of coupons needed to complete $m$ full sets is $n(\log n+(m- 1) \log\log n+O(1))$ in expectation. Erdös and Rényi \cite{ER61} provided concentration bounds around this expectation, and specified the constant in the linear term. However, it is important to note that this bound holds \emph{only when $m$ is a constant}; as Erdös and Rényi themselves note, ``It is an interesting problem to investigate the limiting distribution of $v_m(n)$ when $m$ increases together with $n$, but we can not go into this question here.'' Surprisingly, to our knowledge, this open problem has never been addressed, and while we give an asymptotic analysis here, it remains an open question to extend the more fine-grained bounds of Newman and Shepp, and Erdös and Rényi to the case where $m$ also tends to infinity.

Our primary focus is a \emph{distributed} generalization: when completing, for example, the World Cup sticker album, collectors generally do not, to the authors' knowledge,  deliberately congregate in large groups in order to exchange stickers in an organized fashion. Instead, we would expect that exchanges are usually ad-hoc, and made between individual pairs of collectors. So, we will abstract such behavior using a ``population protocol''-style model of random pairwise interactions: in each round, an independent, uniformly random pair of collectors will meet, and can swap coupons between them as they wish. We then aim to analyze the trade-off between the number of coupons that each collector must sample, and the number of interactions required, in order for all collectors to obtain full collections. We call this problem ``coupon collecting with friends.''

\subsection{The Formal Problem Setting}
A set $M$ of $m$ collectors each wish to obtain a full collection of $n$ distinct coupons. For this, they will operate in sequences of \emph{collection} (sampling) and \emph{exchanging} (interaction) phases:

\begin{enumerate}
	\item A collection phase, in which each collector independently and uniformly samples, with replacement, $r_c$ coupons from $[n]$.
	\item An exchanging phase, in which $r_e$ sequential interactions between independent, uniformly random pairs of collectors occur. An interacting pair of collectors can choose to exchange coupons however they wish.
\end{enumerate}

We are interested in the trade-off between the numbers of collection rounds and exchanging rounds ($r_c$ and $r_e$) that are required for each of the $m$ agents to obtain a full collection of $n$ distinct coupons.

\subsection{Preliminaries}
In the following, we denote $\ln x := \log_e x$ and $\log x := \log_2 x$. We make frequent use of the well-known inequalities $1-x\le e^{-x}$ for $x\in\real$ and $1-x\ge 4^{-x}$ for $x \in [0, \frac 12]$, and the Chernoff bound in the following standard form:

\begin{lemma}[Chernoff bound]\label{lem:Chernoff}
	Suppose $Z_1, \dots, Z_t$ are independent random variables taking values in $\{0, 1\}$. Let $Z$ denote their sum and let $\mu = \Exp{Z}$ denote the sum's expected value. Then for any $\delta \in [0,1]$,
	
\[\Prob{Z \le (1-\delta)\mu} \le e^{-\frac{\delta^2\mu}{2}}\enspace.\]
\end{lemma}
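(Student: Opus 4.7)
The plan is to use the standard exponential-moment (Chernoff) technique applied to the lower tail. First, I would introduce a parameter $\lambda > 0$ and rewrite the event $Z \le (1-\delta)\mu$ as $e^{-\lambda Z} \ge e^{-\lambda(1-\delta)\mu}$; Markov's inequality then yields
\[\Prob{Z \le (1-\delta)\mu} \;\le\; e^{\lambda(1-\delta)\mu}\,\Exp{e^{-\lambda Z}}.\]
By independence of the $Z_i$, the moment generating function factorises as $\Exp{e^{-\lambda Z}} = \prod_{i=1}^{t} \Exp{e^{-\lambda Z_i}}$, so bounding each factor individually suffices.

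Next, I would bound each factor. Writing $p_i := \Prob{Z_i = 1}$, a direct computation gives $\Exp{e^{-\lambda Z_i}} = 1 - p_i(1-e^{-\lambda})$, and the elementary inequality $1+x \le e^x$ (which the paper already lists as a standard tool) yields $\Exp{e^{-\lambda Z_i}} \le \exp(-p_i(1-e^{-\lambda}))$. Multiplying these bounds and using $\sum_i p_i = \mu$ in the exponent gives
\[\Prob{Z \le (1-\delta)\mu} \;\le\; \exp\!\bigl(\lambda(1-\delta)\mu - \mu(1-e^{-\lambda})\bigr).\]
Optimising by setting $\lambda = -\ln(1-\delta) \ge 0$ (well-defined for $\delta \in [0,1)$; the case $\delta = 1$ is trivial since then $(1-\delta)\mu = 0$) recovers the classical closed form $\Prob{Z \le (1-\delta)\mu} \le \bigl(e^{-\delta}/(1-\delta)^{1-\delta}\bigr)^{\mu}$.

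The remaining step is to pass from this expression to the cleaner bound $e^{-\delta^2\mu/2}$ stated in the lemma. Taking logarithms, this reduces to the one-variable inequality $(1-\delta)\ln(1-\delta) + \delta \ge \delta^2/2$ on $[0,1]$. I would verify this by letting $g(\delta) := (1-\delta)\ln(1-\delta) + \delta - \delta^2/2$, noting $g(0) = 0$ and $g'(0) = 0$, and then checking $g''(\delta) = \delta/(1-\delta) \ge 0$ on $[0,1)$, so $g' \ge 0$ and therefore $g \ge 0$ throughout. No step presents a genuine obstacle --- the lemma is invoked here only to fix a specific constant in a very standard bound, and the full derivation is available, e.g., in Motwani and Raghavan~\cite{MR95}.
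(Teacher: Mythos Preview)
Your derivation is correct and follows the textbook exponential-moment argument exactly as one finds it in, e.g., Motwani and Raghavan~\cite{MR95}. Note, however, that the paper does not actually prove Lemma~\ref{lem:Chernoff}: it is stated in the Preliminaries as a standard tool (``the Chernoff bound in the following standard form'') and used without justification. So there is no ``paper's own proof'' to compare against; you have simply supplied the omitted standard argument, which is fine.
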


\section{What Happens With No Exchanges?} 
We first look at the most ``standard'' variant of the trade-off: when $r_e = 0$, i.e., no exchanges are allowed. In this case, the problem is simply $m$ separate instances of the standard coupon collector problem, since each collector must independently collect a full set without help from the other collectors. 

It has long been known \cite{N60,ER61} that the number of samples needed for a single collector to obtain a full set is $n \ln n \pm O(n)$ with probability $1-\eps$, where $\eps>0$ is any positive constant. To be precise, we use the following statement as phrased by Motwani and Raghavan: 

\begin{statement}[\cite{MR95}, corollary to Theorem 3.8, Section 3.6.3]\label{st:lb}
	For any real constant $c$, we have
	\[\lim\limits_{n\rightarrow \infty} \Prob{X \le n(\ln n - c)} = e^{-e^c}\]
	\center{and}
	\[\lim\limits_{n\rightarrow \infty} \Prob{X \ge n(\ln n + c)} = 1-e^{-e^{-c}}\enspace.\]
\end{statement} 

(Here $X$ is the random variable denoting the number of required samples.) The statement implies that the probability of failure for a single collector after $n\ln n + \omega(n)$ samples tends to $0$ as $n$ tends to infinity. However, this is not quite sufficient for us: we require $m$ independent instances to all succeed, so we need the probability of failure for each collector to be less than $1/m$, and we do not treat $m$ as a constant. So, we need to know how \emph{fast} the failure probability tends to $0$.

We give the following straightforward asymptotic upper and lower bounds for the problem (for $n>1$; for $n=1$, exactly $1$ collection round is clearly necessary and sufficient).
\begin{lemma}
If $r_e = 0$, then $r_c= O(n \log mn)$ is sufficient to succeed with probability $1-(mn)^{-1}$.
\end{lemma}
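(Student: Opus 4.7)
The plan is to avoid invoking the sharp asymptotic Statement~\ref{st:lb} and instead use a direct union bound, which gives us explicit, non-asymptotic control over the failure probability as a function of both $n$ and $m$. This is exactly what is needed since we cannot treat $m$ as a constant, and the tail information in Statement~\ref{st:lb} is only stated in the limit.

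First I would fix a single collector and a single coupon $i \in [n]$. Since the collector draws $r_c$ coupons uniformly and independently from $[n]$, the probability that coupon $i$ is never drawn is exactly $(1 - 1/n)^{r_c}$, which by the standard inequality $1-x \le e^{-x}$ is at most $e^{-r_c/n}$. Next I would apply a union bound over the $n$ coupons to bound the probability that this particular collector fails to obtain a full set by $n \cdot e^{-r_c/n}$, and then a second union bound over the $m$ collectors to bound the total failure probability by $mn \cdot e^{-r_c/n}$.

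It then remains to choose $r_c$ so that $mn \cdot e^{-r_c/n} \le (mn)^{-1}$, which rearranges to the requirement $r_c \ge 2n \ln(mn)$. Taking $r_c = \lceil 2n \ln(mn) \rceil = O(n \log mn)$ therefore suffices, completing the proof.

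There is no substantive obstacle here: the argument is a plain two-level union bound combined with the elementary estimate $(1-1/n)^{r_c} \le e^{-r_c/n}$. The only subtlety worth flagging is precisely the one the authors highlight in the paragraph preceding the lemma, namely that the $m$ instances are independent but we still need each to succeed with probability at least $1 - 1/(m^2 n)$ (rather than $1-o(1)$) to drive the overall failure probability down to $(mn)^{-1}$; this is what forces the extra $\log m$ factor inside the logarithm compared to the classical $\Theta(n\log n)$ bound for a single collector.
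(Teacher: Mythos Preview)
Your argument is correct and is essentially identical to the paper's proof: both fix a collector--coupon pair, bound the miss probability by $(1-1/n)^{r_c}\le e^{-r_c/n}$, take $r_c=2n\ln(mn)$ so this is $(mn)^{-2}$, and then union bound over the $mn$ pairs. The only cosmetic difference is that you phrase the union bound in two stages (coupons, then collectors) while the paper does it in a single step over all pairs.
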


\begin{proof}
Let $r_c = 2n \ln mn$. Fix a particular collector $v$ and coupon $\alpha$. The probability that $v$ does not collect a copy of $\alpha$ is at most $\left(1-\frac 1n\right)^{2n\ln mn}\le e^{-\frac{2n\ln mn}{n}} = (mn)^{-2}$. By a union bound over all coupons and collectors, the probability that any collector does not receive a copy of any coupon is at most $mn \cdot (mn)^{-2} = (mn)^{-1}$.
\end{proof}

\begin{lemma}
	For $n>1$, if $r_e = 0$, then $r_c= \Omega(n \log mn)$ is necessary to succeed with any positive constant probability.
\end{lemma}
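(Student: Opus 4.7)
The plan is to apply the second moment method to the total count of (collector, coupon) pairs for which the coupon has not yet been sampled. Let $q := (1-1/n)^{r_c}$ denote the probability that a given collector misses a given coupon after $r_c$ samples, and define
\[W := \sum_{v \in M} \sum_{\alpha \in [n]} \mathbbm{1}[v \text{ has not sampled } \alpha].\]
The success event is exactly $\{W=0\}$, and $\Exp{W} = mnq$.

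The next step is to show $\mathrm{Var}(W) \le \Exp{W}$. Indicators belonging to different collectors are independent, since their samples are independent. Within a single collector and for distinct coupons $\alpha \ne \beta$, the covariance equals $(1-2/n)^{r_c} - (1-1/n)^{2r_c}$, which is non-positive because $(1-2/n) \le (1-1/n)^2$. Summing the per-indicator variances $q(1-q) \le q$ then gives $\mathrm{Var}(W) \le mnq$.

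Chebyshev's inequality now yields $\Prob{W=0} \le \mathrm{Var}(W)/\Exp{W}^2 \le 1/(mnq)$. If success occurs with probability at least some positive constant $p$, this forces $q \le 1/(pmn)$. Using the stated inequality $1-x \ge 4^{-x}$ on $[0,1/2]$ (applicable since $n \ge 2$), we have $q \ge 4^{-r_c/n}$, giving
\[4^{-r_c/n} \le \frac{1}{pmn} \quad \Longrightarrow \quad r_c \ge \frac{n}{2}\log(pmn).\]
For $mn \ge 1/p^2$ this is $\Omega(n \log mn)$. For smaller $mn$ the quantity $\log mn$ is $O(1)$, and the required $r_c = \Omega(n)$ follows from a separate observation: success requires each collector to sample any fixed coupon with probability at least $p^{1/m} \ge p$ (by independence across collectors), which via $1-(1-1/n)^{r_c} \ge p$ and $1-1/n \ge 4^{-1/n}$ forces $r_c \ge (n/2)\log(1/(1-p))$.

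The main obstacle I anticipate is the variance bound. A per-collector approach, bounding $\Prob{W_v=0} \le 1/(nq)$ by Chebyshev on each collector and multiplying by independence, yields only $r_c = \Omega(n \log n)$ and loses the $\log m$ factor. Counting misses jointly, and using cross-collector independence together with within-collector negative correlation to achieve $\mathrm{Var}(W) = O(\Exp{W})$ rather than the naive $O(\Exp{W}^2)$, is what produces the sharper $\Prob{W=0} \le 1/\Exp{W}$ bound and separates $\log(mn)$ from $\log n$.
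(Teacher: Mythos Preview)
Your proof is correct, but it takes a genuinely different route from the paper's. The paper splits the bound into two pieces: $\Omega(n\log n)$ is inherited from the single-collector case via Statement~\ref{st:lb}, and $\Omega(n\log m)$ is obtained by fixing a \emph{single} coupon $\alpha$, observing that each collector misses $\alpha$ with probability at least $m^{-1/2}$ when $r_c\le\frac14 n\log m$, and then using independence across collectors to bound the probability that \emph{all} collectors obtain $\alpha$ by $(1-m^{-1/2})^m\le e^{-\sqrt m}$. The two pieces are then combined as $\Omega(\max\{n\log n,n\log m\})=\Omega(n\log mn)$. Your second-moment argument instead treats all $mn$ (collector, coupon) pairs at once and extracts the full $\log(mn)$ factor in a single step, exploiting the within-collector negative correlation to get $\mathrm{Var}(W)\le\Exp{W}$. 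This is somewhat more work (the variance computation and the separate small-$mn$ case) but has the advantage of being entirely self-contained: you never invoke the sharp single-collector limit law of Statement~\ref{st:lb}. The paper's argument is more elementary---no second moments---but leans on that external result for the $\log n$ part.
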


\begin{proof}
By Statement \ref{st:lb}, even a single collector must perform $\Omega(n\log n)$ samples to collect all $n$ coupons with any constant probability. We now show that $\Omega(n\log m)$ samples per collector are required for all $m$ collectors to be successful. The lower bound is then $\Omega(\max\{n\log m,n\log n\})= \Omega(n\log mn)$.

Fix a particular coupon $\alpha$, and let $r_c \le \frac{1}{4} n \log m$. The probability that a particular collector $v$ does not receive a copy of $\alpha$ is $(1-\frac 1n)^{r_c} \ge 4^{\frac{-r_c}{n}} \ge 4^{\frac{-\log m}{4}} = m^{-\frac{1}{2}}$ (using that $1-x\ge 4^{-x}$ for $x \in [0,\frac 12]$).

The events that each collector receives a copy of $\alpha$ are independent. Therefore, the probability that all collectors receive a copy is 

\begin{align*}
\Prob{\bigcap_{v \in M}\text{\{$v$ receives a copy of $\alpha$\}}} &
= \prod_{v \in M}\Prob{\text{$v$ receives a copy of $\alpha$}}\\
&\le \prod_{v \in M} \left(1-m^{-\frac{1}{2}}\right)\\
&= \left(1-m^{-\frac{1}{2}}\right)^m\\
&= e^{-m^{-\frac{1}{2}} \cdot m} = e^{-\sqrt m}.
\end{align*}

So, in order to achieve any constant (as $m\rightarrow \infty$, which we may assume since this component of the lower bound is only relevant when $m>n$) probability of success, we require $r_c> \frac{1}{4} n \log m$. 
\end{proof}

\section{What Happens With Unlimited Exchanges?}
If an unlimited amount of exchanges are allowed, then the problem is equivalent to simply ensuring that $m$ copies of each coupon are sampled between all collectors, since the exchanges will then allow these to eventually be distributed to each collector. As mentioned, for constant $m$ strong bounds are known \cite{N60,ER61}, but we are not aware of any prior work for non-constant $m$.

Again, we can show straighforward matching asymptotic bounds:

\begin{lemma}
	If $r_e = \infty$, then $r_c= O(n+\frac{n\log n}{m})$ is sufficient to succeed with probability $1-\frac 1n$.
\end{lemma}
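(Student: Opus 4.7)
The plan is to use the reduction noted immediately before the statement: with $r_e = \infty$, success is equivalent to the multiset of all sampled coupons containing at least $m$ copies of every type, so they can be routed one-per-collector via the unlimited interactions. Set $r_c = c\bigl(n + \tfrac{n \log n}{m}\bigr)$ for a sufficiently large absolute constant $c$, and let $T := m r_c = c(mn + n \log n)$ be the total number of samples drawn across all collectors. For a fixed coupon $\alpha \in [n]$, let $X_\alpha$ count its occurrences among these $T$ i.i.d.\ uniform draws from $[n]$; then $X_\alpha$ is binomial with mean
\[
\mu_\alpha := \Exp{X_\alpha} = T/n = c(m + \log n) \;\ge\; c \cdot \max(m, \log n).
\]

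The key step is a Chernoff bound: since $\mu_\alpha \ge cm$, writing $m = (1-\delta)\mu_\alpha$ gives $\delta \ge 1 - 1/c$, which lies in $[0,1]$ and is close to $1$ for large $c$. Lemma~\ref{lem:Chernoff} therefore yields
\[
\Prob{X_\alpha < m} \;\le\; \exp\!\left(-\tfrac{\delta^2 \mu_\alpha}{2}\right) \;\le\; \exp\!\left(-\tfrac{(1 - 1/c)^2 \cdot c \log n}{2}\right),
\]
where the second inequality uses $\mu_\alpha \ge c \log n$. Taking $c$ large enough (a concrete choice such as $c = 8$ suffices after converting $\log$ to $\ln$), this bound is at most $n^{-2}$, so a union bound over the $n$ coupon types caps the overall failure probability at $n \cdot n^{-2} = 1/n$, as required.

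I do not expect a real obstacle: modulo the reduction from the paper, the statement is a direct Chernoff-plus-union-bound. The one point worth flagging is the role of the two summands in $r_c$. The additive $n$ term ensures $\mu_\alpha \ge cm$, which is what lets the relative Chernoff deviation be bounded away from $1$ (so the exponent stays $\Omega(\mu_\alpha)$); the $\tfrac{n \log n}{m}$ term ensures $\mu_\alpha \ge c \log n$, which is what makes the resulting tail $n^{-\Omega(1)}$ strong enough to absorb the union bound over $n$ coupons. A single constant $c$ must be chosen large enough to handle both regimes ($m \ll \log n$ and $m \gg \log n$) simultaneously, but this is a routine verification and does not affect the asymptotic form of the bound.
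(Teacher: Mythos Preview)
Your proposal is correct and follows essentially the same approach as the paper: reduce (via unlimited exchanges) to requiring at least $m$ total copies of each coupon, apply a Chernoff bound to a single coupon's count, and union bound over the $n$ types. The only cosmetic difference is that the paper applies Chernoff with a fixed $\delta = \tfrac{1}{2}$ (using $m < \mu/2$), whereas you set $\delta = 1 - m/\mu_\alpha$; both yield a tail of $n^{-2}$ for a suitable constant, and the rest of the argument is identical.
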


\begin{proof}
Let $r_c= 16(n+\frac{n\ln n}{m})$, i.e., $16(mn+n\ln n)$ total samples are taken. Fix a particular coupon $\alpha$. The expected number of copies of $\alpha$ obtained is $\mu:=16(m+\ln n)$, and each sample is independent. So, by a Chernoff bound (Lemma \ref{lem:Chernoff}),
	
	\begin{align*}
		&\Prob{\text{fewer than $m$ copies of $\alpha$ are collected}}\\
		&\hspace{1in}< \Prob{\text{at most $(1-\frac 12)\mu$ copies of $\alpha$ are collected}}\\
		&\hspace{1in}\le e^{-\frac{\mu}{8}} \le e^{-\frac{16\ln n}{8}}\le n^{-2}.
	\end{align*}
	
	Taking a union bound over all coupons, we find that the probability that any coupon does not have at least $m$ copies sampled is at most $\frac 1n$. So, with probability at least $1-\frac 1n$, every coupon is sampled at least $m$ times, and with unlimited exchanges we can complete every collector's collection.
\end{proof}

\begin{lemma}
	If $r_e = \infty$, then $r_c= \Omega(n+\frac{n\log n}{m})$ is necessary to succeed with any positive constant probability.
\end{lemma}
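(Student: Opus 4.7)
The plan is to prove the two terms of the lower bound separately, since each dominates in a different regime. The crucial fact is that exchanges cannot create coupons: if some coupon $\alpha$ is sampled fewer than $m$ times across the entire collection phase, at least one collector cannot obtain $\alpha$ and the protocol fails regardless of how exchanges are performed. Every lower bound on the total sample count $m r_c$ therefore translates directly into a lower bound on $r_c$.

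For the $\Omega(n \log n / m)$ term I would treat the $m r_c$ samples as a single stream of i.i.d.\ uniform draws from $[n]$; the identity of the collector making each draw is irrelevant to which coupon types appear at all. Applying Statement~\ref{st:lb} to this stream, if $m r_c \le n(\ln n - c)$ for a constant $c$, then as $n \to \infty$ the probability that every coupon appears at least once tends to $e^{-e^{c}}$. For any target success probability $p>0$, choosing $c$ so that $e^{-e^{c}} < p$ forces $m r_c > n(\ln n - c)$, and hence $r_c = \Omega(n \log n / m)$.

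For the $\Omega(n)$ term I would instead fix a particular coupon $\alpha$ and suppose $r_c \le \delta n$ for a small constant $\delta$ (to be chosen as a function of $p$). The number of samples equal to $\alpha$ is distributed as $\mathrm{Bin}(m r_c, 1/n)$, with expectation at most $\delta m$, so by Markov's inequality the probability that this count reaches $m$ is at most $\delta$. In the complementary event some collector is deprived of $\alpha$, so the overall success probability is at most $\delta$; taking $\delta < p$ then rules out $r_c \le \delta n$.

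Combining the two gives $r_c = \Omega(\max\{n, n \log n / m\}) = \Omega(n + n \log n / m)$. I do not anticipate any real obstacle: both components reduce to elementary tail bounds on the global sample budget $m r_c$, with the classical coupon-collector threshold $n \log n$ governing the ``coverage'' regime and the requirement of $m$ copies per coupon governing the linear regime. The mildest subtlety is simply that the fixed-$c$ constant in Statement~\ref{st:lb} is enough for the asymptotic statement because $n(\ln n - c) = \Omega(n \log n)$ for any constant $c$.
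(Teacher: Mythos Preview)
Your proposal is correct and follows essentially the same decomposition as the paper: reduce to the total sample budget $m r_c$, use Statement~\ref{st:lb} for the $\Omega(n\log n/m)$ part, and handle the $\Omega(n)$ part separately. The one place you work harder than necessary is the linear term: rather than the Markov argument on $\mathrm{Bin}(m r_c,1/n)$, the paper simply observes by pigeonhole that obtaining $m$ copies of each of $n$ coupons deterministically requires at least $mn$ total samples, so $r_c \ge n$ is needed for any nonzero success probability.
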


\begin{proof}

A lower bound of $\Omega(n\log n)$ samples follows from the standard coupon collector problem: by Statement \ref{st:lb}, with $o(n\log n)$ samples we cannot collect even one copy of all coupons with any constant probability. Furthermore, $mn$ samples are clearly necessary to collect $m$ copies of each of the $n$ coupons. So, we have a lower bound of $\Omega(\max\{mn,n\log n\})$ total samples, i.e. $r_c=  \Omega(n+\frac{n\log n}{m})$.
\end{proof}

We now see the power of allowing exchanges: with unlimited exchanges between $m$ participants, the amount of samples required per collector reduces from $\Theta(n\log mn)$ to $\Theta(n+\frac{n\log n}{m})$. In particular, collaborating with a small group of $m=O(\log n)$ collectors reduces the required number of samples linearly in $m$ (from $\Theta(n\log n)$ to $\Theta(\frac{n\log n}{m})$), which may be an appealing prospect to collectors of World Cup stickers (or their parents).

\section{Minimizing Exchanges for Optimal Collection Rounds}
Now we reach the main question of this work: how many exchanging rounds are necessary to ensure completion using the asymptotically optimal amount of collection rounds? 

We first prove the following upper bound:

\begin{theorem}\label{thm:mainub}
If $r_c\ge 36(n+\frac{n\ln n}{m}) $, $r_e = O(m\log mn)$ is sufficient to succeed with probability $1-\frac {1}{mn}$.
\end{theorem}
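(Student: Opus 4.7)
I would have the collectors use the obvious greedy protocol: whenever two collectors meet, each shares all of its coupons, so that after the interaction both collectors hold the union of their pre-interaction collections. Under this rule, the set of holders of any particular coupon $\alpha$ is monotone non-decreasing over time, and evolves exactly as the familiar push--pull epidemic on the complete graph $K_m$ under uniform random edge selection.

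\textbf{Step 1: coverage after the collection phase.} The total number of samples is $m r_c \ge 36n(m + \ln n)$, so the expected number of copies of each coupon $\alpha$ is $36(m + \ln n)$. A straightforward application of Lemma~\ref{lem:Chernoff} (together with $(1-1/n)^{mr_c}\le e^{-mr_c/n}\le e^{-36(m+\ln n)}\le n^{-36}$ for the ``$0$ copies'' event) shows that with probability at least $1-\tfrac{1}{3mn}$, every coupon has been sampled by at least one collector during the collection phase. Condition on this event; in particular, each coupon $\alpha$ has a non-empty initial holder set $H_0^\alpha$.

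\textbf{Step 2: per-coupon epidemic analysis.} Fix a coupon $\alpha$ and let $h_t := |H_t^\alpha|$. The probability that the $(t{+}1)$-st interaction increases $h_t$ equals $p_t = \frac{2 h_t (m-h_t)}{m(m-1)}$. I would prove that with probability at least $1-\tfrac{1}{3mn^2}$, the epidemic reaches all $m$ collectors within $T = C m \log(mn)$ interactions for an absolute constant $C$. I would split this into two sub-phases:
\begin{itemize}
\item \emph{Growth phase} ($1 \le h_t \le m/2$): here $p_t \ge h_t/m$. Partition the interactions into consecutive windows of length $4m$. By a Chernoff bound, conditionally on $h$ entering a window with value $k$, the number of ``success'' interactions in that window is at least $k$ with probability $1 - e^{-\Omega(k)}$, so $h$ at least doubles; iterating over $O(\log m)$ doublings and paying a small extra factor for the very small-$k$ windows yields reach-time $O(m \log(mn))$ with the required failure probability.
\item \emph{Saturation phase} ($h_t > m/2$): now let $k = m - h_t$; each interaction decreases $k$ with probability $\ge k/m$. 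The expected time to drive $k$ from $m/2$ down to $0$ is $\sum_{k=1}^{m/2} \Theta(m/k) = \Theta(m\log m)$, and a standard concentration argument (Chernoff on the number of ``useful'' interactions in a window of length $C'm\log(mn)$ for each level $k$) boosts this to $O(m \log(mn))$ with failure probability $O(1/(mn^2))$.
\end{itemize}

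\textbf{Step 3: union bound.} Taking a union bound over the $n$ coupons, the probability that some coupon fails to reach all $m$ collectors within $Cm \log(mn)$ interactions is at most $\tfrac{1}{3mn}$. Combined with the $\tfrac{1}{3mn}$ failure probability from Step~1, the overall failure probability is at most $\tfrac{1}{mn}$, which proves the theorem with $r_e = O(m\log(mn))$.

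\textbf{Main obstacle.} The delicate part is the growth phase: when $h_t$ is small, $p_t$ is small, so the per-window Chernoff bound is only strong enough once $h$ has already grown moderately large, and the analysis has to handle the early doublings (where concentration is weak) separately, typically by paying a logarithmic factor to absorb them into the $\log(mn)$ term. Once past that hurdle, the saturation phase is essentially a variant of the classical coupon-collector tail bound.
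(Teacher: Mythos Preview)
Your proposal rests on a modeling error: in this problem coupons are \emph{physical objects}, not information. When two collectors interact they may \emph{transfer} coupons, but they cannot duplicate them. This is exactly why the paper's ``unlimited exchanges'' section needs $\Omega(mn)$ total samples even with $r_e=\infty$: at least $m$ physical copies of every coupon must exist somewhere in the system before all $m$ collections can be completed. Your greedy ``take the union'' rule is therefore not a legal protocol, and your Step~1 guarantee---that every coupon is sampled at least once---is far too weak: with only one copy of $\alpha$ in circulation, only one collector can ever hold $\alpha$, regardless of how many interactions occur. The push--pull epidemic analysis you sketch is an analysis of rumor spreading, not of this problem.

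The paper's actual proof exploits the large value of $r_c$ to avoid any growth/saturation phase analysis altogether. It first shows (via a per-collector Chernoff bound and a union bound over $\binom{m}{m/3}$ subsets) that with high probability every coupon $\alpha$ has at most $m/3$ ``bad'' collectors, meaning collectors holding $\le 1$ copies of $\alpha$, after the collection phase. The exchange rule is: a collector with $\ge 2$ copies of $\alpha$ hands one to an interacting partner with $0$ copies. The key invariant is that the set of collectors holding $\ge 2$ copies of $\alpha$ never shrinks below $m/3$, because each time such a collector gives a copy away, a $0$-copy collector becomes a $1$-copy collector, and there were at most $m/3$ of those initially. Consequently, in \emph{every} interaction, each outstanding (coupon, collector) deficit is resolved with probability at least $\tfrac{2}{3m}$, uniformly; after $6m\ln(mn)$ interactions a single union bound over the at most $mn$ such pairs finishes the argument. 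There is no ``small-$h$'' obstacle precisely because the surplus-holder set starts (and stays) at a constant fraction of $M$.
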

\begin{proof}
	Fix a coupon $\alpha$ to analyze during the collection phase. We will call collectors that receive fewer than $2$ copies of $\alpha$ during the collection phase \emph{bad}, and call them \emph{good} otherwise. Fix also a specific collector $v$. After $r_c$ collection rounds, $v$ receives, in expectation, $\mu:=\frac{r_c}{n} $ copies of $\alpha$. Every collection round is independent, so by a Chernoff bound, the probability that $v$ is bad is at most:
	
	\begin{align*}
		\Prob{\text{$v$ is bad}}&
		=\Prob{\text{$v$ receives at most $\left(1-\frac{\mu-1}{\mu}\right)\mu$ copies of $\alpha$}}\\
		&\le e^{-\frac{(\mu-1)^2 }{2\mu}} < e^{1-\frac \mu 2}.
	\end{align*}
	
	 Keeping $\alpha$ fixed but unfixing $v$, we now wish to bound the probability that at least $\frac m3$ collectors are bad (have fewer than $2$ copies of $\alpha$). We do this by a union bound over all possible sets of $\frac m3$ collectors (technically $\lceil\frac m3\rceil$, but we omit the ceiling functions for clarity since the effect is negligible), using that the `badness' of collectors is independent:
	 
	\begin{align*}
	\Prob{\text{at least $\frac m3$ collectors are bad}}&
	=\Prob{\bigcup_{\substack{S\subset M\\|S|=\frac m3}}\text{all collectors in $S$ are bad}}\\
	&\le \sum_{\substack{S\subset M\\|S|=\frac m3}}\Prob{\text{all collectors in $S$ are bad}}\\
	&\le \binom{m}{\frac m3} (e^{1-\frac \mu 2})^{\frac m3}\\
	&\le \left(3e\right)^\frac m3 e^{(1-\frac \mu 2)\frac m3}\\
	&=e^{-(\frac \mu 2-2-\ln 3)\frac m3}.
\end{align*}	
	
In the penultimate line here we used the inequality $\binom{a}{b} \le \left(\frac{ae}{b}\right)^b$. Since $\mu = \frac{r_c}{n} \ge 36$, we have $\frac \mu 2-2-\ln 3 > \frac \mu 3$. So, 
\[\Prob{\text{at least $\frac m3$ collectors are bad}}< e^{-\frac \mu 3\cdot \frac m3}= e^{-\frac{r_c m}{9n}}\le e^{-4(m+\ln n)}.\]
	
	Taking a union bound over all coupons, we have that for all coupons there are fewer than $\frac{m}{3}$ bad collectors with probability at least $1-ne^{-4(m+\ln n)} = 1-e^{-4m-3\ln n}$. We call this event a \emph{successful collection phase}.
	
	We now describe the exchanging phase. Let $r_e = 6m\ln mn$. We will use the following simple swapping rule: whenever a collector with at least two copies of some coupon $\alpha$ interacts with a collector with $0$ copies of that coupon, it will give one of its copies. 
	
	A crucial observation is that for a coupon $\alpha$ which has fewer than $\frac m3$ bad collectors after the collection phase, there will always be at least $\frac m3$ collectors with at least two copies throughout the exchanging phase. This is because every time a \emph{good} collector gives away a copy (possibly dropping to $1$ copy itself), a collector with $0$ copies goes up to $1$ copy. Since there are at most $\frac m3$ collectors with $0$ copies to begin with (and collectors never drop down to $0$ copies), this can occur at most $\frac m3$ times, leaving at least $\frac m3$ collectors with multiple copies.
	
	To analyze the exchanging phase, we fix a particular interaction, a particular coupon $\alpha$ which is not yet held by all collectors, and a particular collector $v$ which currently has $0$ copies of $\alpha$. By the above observation, with probability at least $2\cdot \frac 1m\cdot \frac 13 = \frac{2}{3m}$, the interaction pairs $v$ with a collector who has at least $2$ copies of $\alpha$, and so $v$ receives a copy.
	
	There are initially (trivially) at most $mn$ such pairs $(\alpha,v)$ where $v$ holds $0$ copies of $\alpha$. Conditioning on a successful collection phase, each of these pairs is removed in each iteration with probability at least $\frac{2}{3m}$. For a fixed pair, these probabilities hold independently over all iterations. So the probability of a particular pair $(\alpha,v)$ remaining over the entire exchanging phase (i.e., for $v$ to still hold no copy of $\alpha$ upon completion) is at most 
	\[\left(1-\frac{2}{3m}\right)^{6m\ln mn}\le e^{-3\ln mn} = (mn)^{-3}.\]
	
	Taking a union bound over all such pairs, the probability that any pair remains is at most $(mn)^{-2}$. Finally, taking another union bound to remove the conditioning on a successful collection phase, the probability of successfully completing all collections is at least $1- e^{-4m-3\ln n}-(mn)^{-2} \ge 1-\frac {1}{mn}$.

\end{proof}

We next give a pair of lower bounds, which when combined will match the asymptotic expression for $r_e$ from Theorem \ref{thm:mainub}.

\begin{lemma}\label{lem:mainlb1}
	If $r_c \le \frac 14 n\ln n$, then $r_e = \Omega(m\log n)$ is necessary to succeed with probability $1-\frac{1}{n}$.
\end{lemma}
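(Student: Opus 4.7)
My plan is to establish the $\Omega(m\log n)$ lower bound by showing that if $r_e$ is sufficiently small, then with probability exceeding $1/n$ some fixed collector participates in no exchange and simultaneously samples an incomplete set; since such a collector necessarily causes the protocol to fail, this contradicts the required success probability $1-1/n$.

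First I would bound the probability that a lone collector samples all of $[n]$. Writing $p=(1-1/n)^{r_c}$ for the probability of missing a fixed coupon, the inequality $1-x\ge 4^{-x}$ on $[0,1/2]$ yields $p\ge 4^{-r_c/n}\ge 4^{-(\ln n)/4}\ge n^{-1/2}$, hence $\Prob{v\text{'s sample equals }[n]}=(1-p)^n\le e^{-np}\le e^{-\sqrt n}$.

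Next, suppose for contradiction that $r_e\le \tfrac{1}{8}m\log n$. For an arbitrary collector $v$, assuming $m\ge 4$ so that $2/m\le 1/2$, the same $4^{-x}$ bound gives $\Prob{v\text{ is isolated}}=(1-2/m)^{r_e}\ge 4^{-2r_e/m}\ge n^{-1/2}$. Since the collection and exchange phases are independent by the model, the events ``$v$ is isolated'' and ``$v$'s sample is incomplete'' are independent, so their joint probability is at least $n^{-1/2}(1-e^{-\sqrt n})$, which strictly exceeds $1/n$ for $n$ large enough. This event forces $v$'s final set to equal its incomplete sample, hence the protocol fails, contradicting the success guarantee. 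Therefore $r_e>\tfrac{1}{8}m\log n=\Omega(m\log n)$.

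The corner cases $m\le 3$ are handled separately and trivially: since $\Prob{\alpha\text{ sampled by no collector}}=p^m\ge n^{-3/2}$, the expected number of coupons sampled by nobody is at least $n^{-1/2}$, and a standard second-moment bound shows the probability that some coupon is entirely unsampled is at least $n^{-1/2}/2$, which exceeds $1/n$ regardless of $r_e$. The main ``obstacle'' is really just constant calibration --- one must choose the coefficient in $r_e\le c\,m\log n$ (here $c=1/8$) so that the isolation probability $n^{-1/2}$ and the sample-incompleteness probability $1-e^{-\sqrt n}$ multiply to something strictly above $1/n$. Beyond this, the argument is entirely routine, avoiding any need for second-moment estimates on the interaction graph itself because fixing a single $v$ already yields the desired failure probability.
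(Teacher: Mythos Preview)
Your approach is essentially the same as the paper's: fix a collector $v$, show that with probability at least $n^{-1/2}$ it participates in no interaction when $r_e\le\frac18 m\log n$, combine this (via independence of the two phases) with the fact that $v$'s sampled set is incomplete with high probability, and conclude that failure occurs with probability exceeding $1/n$. The paper phrases the combination as a conditioning argument and invokes the known sharp threshold (Statement~\ref{st:lb}) to get $\Prob{v\text{ complete}}=o(1)$, whereas you multiply probabilities directly and try to bound incompleteness from first principles; for $m\le 3$ the paper again cites Statement~\ref{st:lb} on the pooled $\le\frac34 n\ln n$ samples, while you run a second-moment argument, which is also fine.

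There is one slip to fix. You write $\Prob{v\text{'s sample equals }[n]}=(1-p)^n$, but this is not an equality: the events ``coupon $\alpha$ is collected'' for different $\alpha$ are not independent. The inequality $\Prob{v\text{ complete}}\le(1-p)^n$ does hold, by negative association of balls-in-bins occupancies, and that is all you need; but you should state and justify it rather than assert equality. Alternatively you can bypass the issue entirely: either cite the sharp threshold as the paper does, or observe that already $\Prob{v\text{ incomplete}}\ge p> n^{-1/2}$ (from missing a single fixed coupon, with strict inequality since $1-x>4^{-x}$ on $(0,\tfrac12)$), which multiplied by the isolation probability still gives strictly more than $1/n$.
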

\begin{proof}
	Fix a collector $v$. By Statement \ref{st:lb}, since $r_c = n\log n - \omega(n)$, the probability that $v$ receives a full set of coupons during the collection phase is $o(1)$. Denote this probability $q$. To succeed overall with probability $1-\frac 1n$, there must be some case in which $v$ does not receive a full collection during the collection phase, but gains it during the exchanging phase with probability at least $1-\frac 2n$ (over the randomness in the exchanging phase only), since otherwise the total probability of $v$ having a full collection is at most $q + (1-q)(1-\frac 2n) = 1-\frac 2n + \frac {o(1)}{n}< 1-\frac 1n$ (for sufficiently large $n$).
	
	If $r_e\le \frac{1}{8} m\log n$, and for $m\ge 4$, the probability that $v$ is not involved in \emph{any} interactions is at least 
	\[\left(1-\frac{2}{m}\right)^{r_e} \ge 4^{-\frac2m r_e} \ge 4^{-\frac14 \log n} = n^{-\frac 12}\enspace.\]
	
	In this case $v$ cannot obtain a full collection of coupons if it did not have one after the collection phase. So, the probability of success if $v$ did not gain a full collection during the collection phase is at most $1-n^{-\frac 12} < 1-\frac 2n$, which means that the total success probability is less than $1-\frac 1n$.

	For the remaining case $m<4$, we again apply Statement \ref{st:lb}, which implies that the probability of collecting a single full collection with $n\ln n-\omega(n)$ samples is $o(1)$. In total, over the $m<4$ collectors, we are taking at most $\frac 34 n\ln n = n\ln n -\omega(n)$ samples during the collection phase. So, with probability $1-o(1)$, there is some coupon for which no collector has a copy, in which case we cannot hope to be successful even with $r_e=\infty$. Thus, our overall success probability is $o(1)$.
	
\end{proof}

\begin{lemma}\label{lem:mainlb2}
	If $r_c = n\ln n - \omega(n)$, then $r_e = \Omega(m\log m)$ is necessary to succeed with any positive constant probability.
\end{lemma}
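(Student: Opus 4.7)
The plan is to lower-bound the failure probability by identifying collectors that are simultaneously \emph{incomplete} after the collection phase and \emph{uninvolved} in any exchange. Fix a small constant $\delta>0$ and assume toward contradiction that $r_e \le \delta m \ln m$. Applying Statement \ref{st:lb} with any constant $c$, the probability $q=q(n)$ that a fixed collector completes during the collection phase with $r_c = n\ln n - \omega(n)$ samples satisfies $q \le 2e^{-e^c}$ for all sufficiently large $n$; since $c$ can be taken arbitrarily large, $q\to 0$.

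For each collector $v$, let $X_v$ be the indicator of the joint event that $v$ participates in no exchange and $v$ is incomplete after the collection phase. These two sub-events depend on disjoint sources of randomness, and different collectors sample independently, so $\Exp{X_v} = (1-2/m)^{r_e}(1-q) \ge m^{-O(\delta)}(1-q)$ (using $1-x \ge 4^{-x}$ for $x \in [0,\tfrac12]$, which holds once $m\ge 4$). Summing over $v$, $\Exp{X} \ge m^{1-O(\delta)}(1-q) \to \infty$ provided $\delta$ is chosen sufficiently small.

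The main technical step is to show $\Prob{X\ge 1}\to 1$ via the second moment method. For $u\ne v$, a short calculation gives the per-interaction probability $\Prob{u,v \text{ both uninvolved}} = \binom{m-2}{2}/\binom{m}{2} = \tfrac{(m-2)(m-3)}{m(m-1)}$, which is strictly less than $\bigl(\binom{m-1}{2}/\binom{m}{2}\bigr)^2 = \bigl(\tfrac{m-2}{m}\bigr)^2 = \Prob{u \text{ uninvolved}}^2$ (the difference equals $-\tfrac{2(m-2)}{m^2(m-1)}$). Raising to the $r_e$-th power and combining with the cross-collector independence of the incompleteness events yields pairwise negative correlation of the $X_v$'s, so $\mathrm{Var}(X) \le \Exp{X}$, and Chebyshev gives $\Prob{X=0} \le 1/\Exp{X} \to 0$.

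Finally, if $X\ge 1$ then some collector $v$ lacked a coupon after sampling and participated in no exchange, so $v$ cannot acquire the missing coupon and the protocol fails; hence $\Prob{\text{success}} \le \Prob{X=0} \to 0$, contradicting any positive constant success probability. The main obstacle is handling the dependencies among the ``uninvolved'' events across collectors, which a priori could inflate the variance; the key simplification is the observation that these events are actually pairwise negatively correlated, making the second moment estimate essentially as tight as in the fully independent case.
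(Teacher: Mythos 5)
Your proof is correct and identifies the same witness event as the paper --- a collector that is both incomplete after the collection phase and untouched by every interaction --- but reaches it by a genuinely different technical route. The paper first applies Markov's inequality to obtain a set $S$ of $\frac m2$ incomplete collectors, then argues that each $v\in S$ avoids all interactions with probability at least $m^{-1/2}$ \emph{independently of the behaviour of the other members of $S$} (via the worst-case conditioning bound $\frac{2}{m/2+1}<\frac 4m$ on the per-interaction involvement probability), concluding that some member of $S$ is untouched with probability at least $1-e^{-\sqrt m/2}$. You instead sum indicators over all $m$ collectors and run the second moment method, using the exact per-interaction computation $\binom{m-2}{2}/\binom{m}{2}\le\bigl(\binom{m-1}{2}/\binom{m}{2}\bigr)^2$ to establish pairwise negative correlation of the ``uninvolved'' events (the incompleteness events being independent across collectors and independent of the meeting schedule), whence $\mathrm{Var}(X)\le\Exp{X}$ and $\Prob{X=0}\le 1/\Exp{X}\to 0$. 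The calculation checks out, and the negative-correlation observation cleanly replaces the paper's conditional-domination step while dispensing with the Markov step altogether. One small omission: your argument needs $m\to\infty$ for $\Exp{X}\to\infty$, so, as the paper does, you should dispose of the case $m=O(1)$ separately; there the claim reduces to needing at least one exchange round, which follows from Statement \ref{st:lb} since with probability $1-o(1)$ not every collector completes during the collection phase.
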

\begin{proof}
By Statement \ref{st:lb}, for $r_c = n\log n - \omega(n)$, the probability that a particular collector $v$ receives a full set of coupons during the collection phase is $o(1)$. The expected number of collectors receiving full sets is therefore $o(m)$. By Markov's inequality, the probability that at least $\frac m2$ collectors receive full sets at most $o(1)$. With probability $1-o(1)$, therefore, there are at least $\frac m2$ collectors without full sets. We call this event an \emph{unsuccessful collection phase}.

To fill each collector's collection overall with any positive constant probability $\eps>0$, there must be at least one instance with an unsuccessful collection phase on which we do so with probability at least $\frac{\eps}{2}$ (over the randomness of the exchanging phase), since otherwise the total success probability would be at most $\frac{\eps}{2} +o(1)$. We will now show that this requires $r_e = \Omega(m\log m)$ exchanging rounds.
 
Assume that we have an unsuccessful collection phase, and a set $S$ of $\frac m2$ collectors without full sets (again omitting ceiling functions for clarity). Fixing some $v\in S$, the probability that each interaction involves $v$ is $\frac{2}{m}$. Furthermore, it is at most $\frac{4}{m}$ independently of the behavior of all other $u\in S$ (the worst case is that all other $u\in S$ are not involved in the interaction, in which case $v$ is involved with probability $\frac{2}{\frac{m}{2}+1} < \frac{4}{m}$). 

If $r_e\le \frac{1}{16} m\log m$, and for $m\ge 8$, the probability that $v$ is not involved in \emph{any} interactions is at least 
\[\left(1-\frac{4}{m}\right)^{r_e} \ge 4^{-\frac4m r_e} \ge 4^{-\frac14 \log m} = m^{-\frac 12}\enspace,\]

independently of the other $u\in S$. Then, the probability that all collectors in $S$ \emph{are} involved in at least one interaction is at most

\[\left(1-m^{-\frac 12}\right)^{|S|}\le e^{-m^{-\frac 12}\frac m2} = e^{-\frac{\sqrt{m}}{2}} = o(1).\]

That is, with probability $1-o(1)$, at least one collector $v$ in $S$ is not involved in any interactions. In this case $v$ cannot obtain a full collection of coupons: by definition of $S$ its collection is incomplete after the collection phase, and it has no interactions in which to gain new coupons in the exchanging phase. So, we have a total success probability of $o(1)$.

The above analysis assumes that $m\rightarrow \infty$; the case $m=O(1)$ is trivial, since by Statement \ref{st:lb}, with probability $1-o(1)$ we have not completed all collections during the collection phase, and so require at least $1 = \Omega(m\log m)$ exchanging rounds.

\end{proof}

Combining Lemmas \ref{lem:mainlb1} and \ref{lem:mainlb2} yields the following theorem:

\begin{theorem}\label{thm:mainlb}
	If $r_c \le \frac 14 n\ln n$, then $r_e = \Omega(m\log mn)$ is necessary to succeed with probability $1-\frac{1}{n}$.
\end{theorem}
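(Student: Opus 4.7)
The plan is to observe that the hypothesis of Theorem \ref{thm:mainlb} immediately implies the hypotheses of both Lemma \ref{lem:mainlb1} and Lemma \ref{lem:mainlb2}, so the theorem falls out as their combination. Specifically, the assumption $r_c \le \frac{1}{4} n \ln n$ is exactly the hypothesis of Lemma \ref{lem:mainlb1}, and it also entails $r_c = n \ln n - \omega(n)$ (since $\frac{3}{4} n \ln n$ is $\omega(n)$), which is the hypothesis of Lemma \ref{lem:mainlb2}. The success probability requirement $1 - 1/n$ is stronger than both ``succeed with probability $1-\tfrac{1}{n}$'' (as needed for Lemma \ref{lem:mainlb1}) and ``succeed with any positive constant probability'' (as needed for Lemma \ref{lem:mainlb2}, at least for large enough $n$).

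Given these implications, the plan is first to apply Lemma \ref{lem:mainlb1}, which yields the lower bound $r_e = \Omega(m \log n)$. Then I would apply Lemma \ref{lem:mainlb2}, which independently yields $r_e = \Omega(m \log m)$. The final step is to combine the two bounds into a single statement. Since any quantity that is simultaneously $\Omega(m \log n)$ and $\Omega(m \log m)$ must be $\Omega(m \max\{\log n, \log m\}) = \Omega(m \log(mn))$ (using $\log(mn) = \log m + \log n \le 2\max\{\log m, \log n\}$), this delivers the claimed $r_e = \Omega(m \log mn)$.

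There is no real obstacle here: the work has already been done inside the two lemmas, so the proof is essentially a two-line observation. The only minor point to be careful about is confirming that the success probability hypotheses line up properly, and that the two $\Omega$ lower bounds can be combined by taking a max (rather than, say, needing to be added, which would also be fine since the sum is within a constant factor of the max). Both are straightforward, so I would keep the proof very short: invoke both lemmas and state the combined conclusion.
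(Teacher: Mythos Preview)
Your proposal is correct and matches the paper's own proof essentially verbatim: the paper simply invokes Lemmas~\ref{lem:mainlb1} and~\ref{lem:mainlb2} and combines the resulting bounds via $\Omega(\max\{m\log m,\,m\log n\}) = \Omega(m\log mn)$. Your additional remarks about verifying the hypotheses (that $r_c \le \frac14 n\ln n$ implies $r_c = n\ln n - \omega(n)$, and that the success-probability requirements are compatible) are sound and make the argument slightly more explicit than the paper's one-line version.
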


\begin{proof}
By Lemmas \ref{lem:mainlb1} and \ref{lem:mainlb2}, we require $r_e = \Omega(\max\{m\log m,m\log n\}) = \Omega(m\log mn)$.
\end{proof}

We make some observations about the bounds we have shown in Theorems \ref{thm:mainub} and \ref{thm:mainlb}. We now know that $\Theta(m\log mn)$ interactions suffice to achieve an asymptotically optimal number of collection rounds, and are necessary to asymptotically improve over the number of samples needed for the standard single-collector case. If one requires a high probability of success in $n$ (i.e. probability at most $\frac 1n$ of failure), these bounds are tight. However, they leave open the possibility of using fewer interactions to achieve a lower (but still at least a positive constant) success probability. In this regime, Lemma \ref{lem:mainlb1} does not apply, so we have only that $O(m\log mn)$ interactions suffice by Theorem \ref{thm:mainub}, and that $\Omega(m\log m)$ are necessary by Lemma \ref{lem:mainlb2}. We conjecture that it is the upper bound that is tight, and the lower bound that could be improved:

\begin{conjecture}
	If $r_c = O(n+\frac{n\ln n}{m})$, then $r_e = \Omega(m\log mn)$ is necessary to succeed with any positive constant probability.
\end{conjecture}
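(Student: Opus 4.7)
My proof plan is to sharpen the argument of Lemma~\ref{lem:mainlb2} by exploiting many coupons in parallel. Since Lemma~\ref{lem:mainlb2} already supplies $\Omega(m\log m)$, the remaining task is to extract an extra $\log n$ factor when $n$ is large compared to $m$. The high-level idea is that, with $r_c$ chosen at the boundary $C_0(n+n\ln n/m)$ for a sufficiently small constant $C_0$, a constant fraction of coupons are ``sparse'' — held by at most $\beta m$ collectors for some constant $\beta<1$ depending on $C_0$ — so the exchange phase must disseminate $\Omega(n)$ rumors simultaneously, and the maximum dissemination time over these rumors should exceed $m\log m$ by a $\log n$ factor.

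First I would apply a Chernoff or Poisson-approximation argument coupon-by-coupon to identify a random set $S$ of $\Omega(n)$ coupons whose initial holder set $H_\alpha$ satisfies $|H_\alpha|\le \beta m$. Next, fix an arbitrary collector $v$ and condition on the event that $v$ is missing a linear fraction of the coupons in $S$, which happens with constant probability in this regime of $r_c$. For $v$ to complete its collection, every such missing $\alpha$ must reach $v$ through the exchange phase, which requires either $v$ to interact with a collector in $H_\alpha$ or to lie on a chain of interactions originating at $H_\alpha$. For a specific missing $\alpha$, I would bound the probability that $v$ receives $\alpha$ within $r_e$ interactions by an epidemic-style analysis: since only a $\beta$-fraction of partners initially hold $\alpha$, and since interactions between two non-holders are wasted for $\alpha$, the hitting-time calculation resembles that of Lemma~\ref{lem:mainlb2} but with an effective population bounded away from $m$. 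Combining these per-coupon bounds over the $\Omega(n)$ missing coupons of $v$ should produce the $\log n$ factor, and summing (or taking a suitable max) across collectors $v$ should give $\Omega(m\log mn)$.

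The hardest part, and the step I expect to be the main obstacle, is controlling the correlations between coupons and across collectors. Because a single interaction transmits many coupons simultaneously, the dissemination processes for different $\alpha\in S$ are far from independent, and a naive union bound over coupons collapses the desired $\log n$ factor. I expect that a successful proof will require either a coupling to genuinely independent rumor-spreading processes, a second-moment argument on the number of unresolved $(v,\alpha)$ pairs after $o(m\log mn)$ interactions, or an information-theoretic view that bounds the effective number of ``new'' coupons transferred per interaction. The elementary techniques of Lemmas~\ref{lem:mainlb1} and~\ref{lem:mainlb2} alone do not seem to suffice to close the gap, which is why we leave the statement as a conjecture.
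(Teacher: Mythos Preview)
The statement you are addressing is stated in the paper as a \emph{conjecture}, not as a theorem; the paper does not supply a proof. Your proposal is therefore not being compared against an existing argument but against the paper's brief heuristic discussion following the conjecture. On that level, your write-up is consistent with the paper: both you and the authors point to the same obstruction, namely that a single interaction can transfer many coupons at once, so the per-coupon dissemination processes are highly correlated and neither a union bound nor the single-interaction hitting argument of Lemma~\ref{lem:mainlb2} extracts the missing $\log n$ factor. Your sketch is more detailed than the paper's paragraph (you propose identifying a linear-sized set of ``sparse'' coupons and running an epidemic-style analysis per coupon), but you correctly flag that the combination step is where the argument breaks down, and you end, as the paper does, by leaving the statement open.

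One small quibble: you write that you will take $r_c = C_0(n+n\ln n/m)$ for a \emph{sufficiently small} constant $C_0$. For a lower bound on $r_e$ under the hypothesis $r_c = O(n+n\ln n/m)$, the adversarial case is when $r_c$ is as \emph{large} as the hypothesis allows, since more samples can only help the collectors. Your sparse-coupon step would need to survive for arbitrary (fixed) $C_0$, not just small $C_0$; otherwise you would only be proving a weaker statement than the conjecture. This does not change the overall picture, since the correlation obstacle you identify is the genuine bottleneck either way.
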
	

The reason for this conjecture is that the current lower bound does not take into account the difficulty for collectors with incomplete collections to obtain \emph{multiple} coupons during the exchanging phase; it uses only the hardness of ensuring a single interaction. Since most collectors will have $\Theta(n)$ coupons missing after the collection phase, we would expect that collectors will require some number of interactions depending on $n$ in order to complete their collections. However, since the events of a collector gaining two different coupons from an interaction are not independent, this would require more sophisticated techniques to analyze.

\section{Conclusions and Open Problems}
Our aim in this paper has been to introduce the study of what we argue is a natural distributed variant of the coupon collector problem: collection by a group of collectors which can meet, in random pairwise fashion, to exchange coupons. As mentioned, there is one gap in the asymptotic analysis we provide: whether $o(m\log mn)$ exchanges can suffice for the asymptotic optimum of $\Theta(n+\frac{n\ln n}{m})$ collection rounds, under a weaker success guarantee (than high probability in $n$).

Generally, most of the prior work on the standard coupon collector problem has been on finer-grained analysis, pinning down the exact terms in the number of samples required, and one could ask whether we can do the same here. Such a focus would change the problem significantly: in particular, the approach of Theorem \ref{thm:mainub} (ensuring that a constant fraction of collectors always hold multiple copies of each coupon) would not work if the number of samples was ``only just'' sufficient, and one would need to find a different way to analyze the exchanging phase.

Surprisingly, the situation is still not fully understood, even for the more ``traditional'' case, corresponding to $r_c = \infty$, when $m$ tends to infinity alongside $n$. We therefore close by reiterating the open question posed by Erdös and Rényi, and ask how the coupon collector problem behaves when a non-constant number of full collections are required.

\subsection*{Acknowledgements}
Peter Davies is supported by the European Union’s Horizon 2020 research and innovation programme under the Marie Skłodowska-Curie grant agreement No. 754411. 
\bibliographystyle{plain}
\bibliography{coupon}

\end{document}